\newtheorem{theorem}{Theorem}[section]
\newtheorem{lemma}[theorem]{Lemma}
\newtheorem{definition}[theorem]{Definition}
\newenvironment{proof}[1][Proof]{\begin{trivlist}
\item[\hskip \labelsep {\bfseries #1}]}{\end{trivlist}}
\newcommand{\qed}{\nobreak \ifvmode \relax \else
      \ifdim\lastskip<1.5em \hskip-\lastskip
      \hskip1.5em plus0em minus0.5em \fi \nobreak
      \vrule height0.75em width0.5em depth0.25em\fi}
\title{A Polynomial Time Bounded-error Quantum Algorithm for Boolean Satisfiability}
\author[1,2]{Ahmed Younes\thanks{ayounes@alexu.edu.eg}}
\author[2]{Jonathan E. Rowe\thanks{J.E.Rowe@cs.bham.ac.uk}}
\affil[1]{Department of Mathematics and Computer Science, Faculty of Science, Alexandria University, Egypt}
\affil[2]{School of Computer Science, University of Birmingham, Birmingham, B15 2TT, United Kingdom}
\begin{document}
\maketitle

\begin{abstract}

The aim of the paper is to answer a long-standing open problem on the relationship between NP and BQP. The paper shows 
that BQP contains NP by proposing a BQP quantum algorithm for the MAX-E3-SAT problem which is a fundamental NP-hard 
problem. Given an E3-CNF Boolean formula, the aim of the MAX-E3-SAT problem is to find the variable assignment that 
maximizes the number of satisfied clauses. The proposed algorithm runs in $O(m^2)$ for an E3-CNF Boolean formula with $m$ clauses 
and in the worst case runs in $O(n^6)$ for an E3-CNF Boolean formula with $n$ inputs. The proposed algorithm maximizes the set of 
satisfied clauses using a novel iterative partial negation and partial measurement 
technique. The algorithm is shown to achieve an arbitrary high probability of success of 
$1-\epsilon$ for small $\epsilon>0$ using a polynomial resources. 
In addition to solving the MAX-E3-SAT problem, the proposed algorithm can also be used to 
decide if an E3-CNF Boolean formula is satisfiable or not, which is an NP-complete problem, based on the maximum number of satisfied 
clauses.
  
%

\noindent
Keywords: Quantum Algorithm,  MAX-E3-SAT,E3-SAT, Amplitude Amplification, BQP, NP-hard, NP-complete.
\end{abstract}

\section{Introduction}  

A long-standing open problem in quantum computing is the relationship between the classes NP and BQP \cite{Bennett:1997, Buhrman:1999}. 
Decision problems are in NP if 
yes-instances have witnesses that can be checked in polynomial time \cite{Williamson:2011}. The class BQP is the quantum 
computing analogue of the classical class BPP (bounded error probabilistic polynomial) \cite{Fortnow:1999}. 
A problem is in BPP if there is a probabilistic classic algorithm 
(Turing machine with access to random bits) which makes errors (for either yes or no instances) with probability of given a wrong 
answer at most 1/3. The value 1/3 is arbitrary - all that is required is that the value is bounded away from 1/2.  By repeated runs, 
the probability of failure can be made exponentially small. The problem class BQP replaces the classical algorithm with a quantum 
algorithm \cite{Adleman:1997}. Thus a decision problem is in BQP if there is a quantum algorithm for it with probability of being wrong less than 1/3. 

The common belief concerning the relationship between NP and BQP was that NP is not contained in BQP (e.g see chapter 15 of 
\cite{moore:martens}). However, a recent paper by one of the current authors has shown that an NP-hard problem (Graph Bisection) 
can be efficiently solved, with low failure probability, by a quantum algorithm \cite{MaxMinBis}. This implies that NP is 
in fact contained within BQP, as any NP problem can be polynomially reduced to an NP-hard problem. The Graph Bisection problem is, 
perhaps, somewhat obscure and much of the presentation of that result involves ensuring the balance of the partition, which detracts 
from the main features of the approach. Consequently in the current paper, we will directly address the classic Boolean 
Satisfiability problem (SAT) to show precisely how constraints, expressed as Boolean formula, can be encoded into quantum algorithm. 
The constraints are entangled with the superposition of all possible truth-value assignments and a probability amplification 
technique applied to amplify the assignment which maximizes the number of satisfied clauses.

In particular, we will focus on MAX-E3-SAT \cite{CombOptBook}, in which each clause contains exactly three literals, and we will show that our 
quantum algorithm will solve this maximization problem with high probability of success. In particular, it can then be used to 
solve the decision problem (with high probability). Iterating the process allows the probability of failure to be made exponentially 
small.

A key fact about the MAX-E3-SAT problem is that random truth assignments will satisfy, in expectation, 7/8 of the clauses. 
A consequence of the PCP Theorem is that this cannot be improved upon (more precisely, there is no  $(7/8 + \epsilon)$ approximation 
algorithm for constant $\epsilon > 0$) unless P=NP \cite{hastad:2001}. That MAX-E3-SAT can be solved (with high probability) 
in polynomial time by a quantum computer is therefore all the more remarkable.

The result shown in this paper doesn't contradict with that shown \cite{Bennett:1997} where it was shown that 
BQP does not contain NP relative to a random unitary oracle with probability one. This argument does not imply 
that BQP does not contain NP in a non-relativized world which is the novel feature in the proposed algorithm, where 
partial measurement is used in the amplitude amplification process instead of the 
usual unitary amplitude amplification techniques that use iterative calls to an oracle to amplify the required solution.

The aim of the paper is to propose a quantum algorithm for the MAX-E3-SAT problem. The algorithm prepares a superposition 
of all possible variable assignments, then the algorithm evaluates the set of clauses using all the possible variable assignments simultaneously 
and then amplifies the amplitudes of the state(s) that achieve(s) the maximum satisfaction to the set of clauses 
using a novel amplitude amplification technique that applies an iterative partial negation and partial measurement. 
The proposed algorithm runs in $O(m^2)$ for an E3-CNF Boolean formula with $m$ clauses and in the worst case 
runs in $O(n^6)$ for an E3-CNF Boolean formula with $n$ Boolean variables to achieve an arbitrary high probability of 
success of $1-\epsilon$ for small $\epsilon>0$ using a polynomial resources.

The paper is organized as follows; Section 2 shows the data structures and the quantum circuit for encoding an E3-CNF Boolean formula. 
Section 3 presents the proposed algorithm with analysis on time and space requirements. 
Section 4 concludes the paper. 


\section{Data Structures and Clause Encoding}

An $n$ inputs $k$-CNF Boolean formula, 
\begin{equation}
f(x_0,x_1, \ldots, x_{n-1} )= c_0  \wedge c_1  \wedge  \ldots  \wedge c_{m - 1}, 
\end{equation}
 is a conjunction (AND) of $m$ clauses, 
each clause $c_j$  represented by a disjunction (OR) of exactly $k \le n$ literals, 
$c_j  = \left( {l_{j_0}  \vee l_{j_1}  \vee l_{j_2} \ldots \vee l_{j_{k-1}}} \right)$, 
such that a literal $l_{j,a}$ in clause $c_j$ with $0 \le a \le k-1$ and $0\le j \le m-1$ equals to an input 
variable in its true form $x_i$ or its complemented form ${\neg x}_i$. That is,  
$l_{j,a}= \mathop {\mathop x\limits^ \bullet  }\nolimits_i$, 
where $\mathop {\mathop x\limits^ \bullet  }\nolimits_i$ can be replaced by $x_i$ or ${\neg x}_i$ such that 
${\neg x}_i$ is the negation of $x_i$ with $0\le i \le n-1$.
The first aim is to decide whether $f$ is satisfiable or not (deciding $f$). The second aim is to 
find a variable assignment for $x_0,x_1, \ldots$ and $x_{n-1}$ that satisfies $f$ if it is satisfiable (solve $f$) or 
to find a variable assignment that satisfies the maximum possible number of clauses if $f$ is unsatisfiable (maximize $f$).

The problem of deciding whether a $k$-CNF Boolean formula is satisfiable or not is NP-complete and is known as $k$-SAT 
or E$k$-SAT problem. The optimization problem associated with the $k$-SAT problem to find a variable assignment 
to satisfy a satisfiable $k$-CNF formula is NP-hard. 
If the $k$-CNF is unsatisfiable, then the problem of finding a variable assignment to maximize the number of satisfied clauses 
is known as MAX-Ek-SAT problem which is an NP-hard problem \cite{CombOptBook}. 
The maximum number of clauses for a $k$-CNF Boolean formula is $\mathop 2\nolimits^k \left( {\begin{array}{*{20}c}
   n  \\
   k  \\
\end{array}} \right)=O(n^k)$. Without loss of generality, this paper targets the E3-SAT and MAX-E3-SAT problems where $k=3$, so 
the maximum number of clauses $m$ for the MAX-3E-SAT is ${\textstyle{4 \over 3}}n(n - 1)(n - 2)$.

\subsection{Encoding of a Solution }
A candidate solution $S$ to the MAX-E3-SAT problem is a vector of variable assignment 
$A =(x_0,x_1, \ldots ,x_{n-1}) \in \{0,1\}^n$, and each vector $A$ is associated with a vector of the truth values 
$C(A) = (c_0,c_1,\ldots,c_{m-1}) \in \{0,1\}^m$ of the $m$ clauses sorted in order, i.e. $S=(A,C(A)) \in \{0,1\}^{n+m}$. 
The optimal solution $S_{max}=(A_{max},C(A_{max}))$ is the solution that contains a vector of variable assignment $A_{max}$ 
that maximizes the number of 1's in the vector of the truth values $C(A_{max})$ of the $m$ clauses. 
For short, the number of 1's in the vector of clauses $C$, i.e. the number of satisfied clauses, will be referred to 
as the 1-density of $C$ so that the 1-density for a satisfiable formula must be equal to $m$. 
For example, consider the E3-CNF Formula with $n=3$ and $m=4$, 
\begin{equation}
 f(x_0 ,x_1 ,{\rm  }x_2 ) = c_0  \wedge c_1  \wedge c_2  \wedge c_3, 
 \end{equation}
 \noindent
 where,
\begin{equation}
\begin{array}{l}
 c_0  = (\neg x _0  \vee \neg x _1  \vee \neg x _2 ){\rm   }, \\ 
 c_1  = (\neg x _0  \vee x_1  \vee {\rm  }x_2 ), \\ 
 c_2  = (x_0  \vee \neg x _1  \vee {\rm  }x_2 ), \\ 
 c_3  = (x_0  \vee x_1  \vee {\rm  }x_2 ), \\ 
 \end{array}
\end{equation}
\noindent
then a solution to this formula will be encoded as 
$S=(A,C(A))$, where $A=({x_0 ,x_1 , x_2 })$ and $C(A)= ({c_0 ,c_1 ,c_2 ,c_3 })$. 
This formula is satisfiable when $(x_0,x_1,x_2)=(0,0,1),(0,1,1),(1,0,1)$, or $(1,1,0)$, and an instance of an 
optimal solution will be $S_{max}=((0,0,1),(1,1,1,1))$ with $A_{max}=(0,0,1)$ and $C(A_{max})=(1,1,1,1)$. 
For $n \ge 3$ and $m>7$, the E3-CNF formula might not be satisfied \cite{Dimitriou:2005} where the 1-density of the $C(A_{max})$ 
vector will give the maximum number of satisfied clauses and the order of 1's will show the satisfied 
clauses using the variable assignment $A_{max}$. 

For $n \ge 3$ and $m={\textstyle{4 \over 3}}n(n - 1)(n - 2)$, we have the problem instance compsrising all possible clauses.
The 1-density of $C(A)$ in this case will be 7m/8 which is 
the worst possible case for the 1-density of $C$. The 3-CNF formula will be 
unsatisfied for an arbitrary variable assignment $A$ \cite{{Zhang:2001}}.

\subsection{Encoding of a Clause}
\label{cfsec}
An E3-CNF formula with $n$ inputs and $m$ clauses will be encoded as an $n+m$ inputs/outputs quantum circuit.  
Every E3-CNF clause $c  = \left(l_0  \vee l_1  \vee l_2 \right)$ will be encoded using a $4 \times 4$ quantum gate. 
The $GT^4$ ( $4 \times 4$ Generalized Toffli) gate \cite{toffoli80} is the main primitive gate
that will be used to encode a clause. The $GT^{4}$ gate is defined
as follows:

\begin{definition}($GT^{4}$ gate)

$GT^{4}$ gate is a reversible gate denoted as,
\begin{equation} 
( y_0 , y_1,y_2;f_{out}) = GT^{4}(x_{0} \oplus \delta_0, x_{1} \oplus \delta_1 ,x_{2} \oplus \delta_2 ;f_{in}),
\end{equation}
\noindent
where $x_a, \delta_a, f_{in}$ and $f_{out} \in \{0,1\}$ with $a \in \{0,1,2\}$. 
The $GT^{4}$ gate has 4 inputs: $x_{0}$, $x_{1}$, $x_{2}$ (known as control
qubits) and $f_{in}$ (known as target qubit). Each control qubit $x_a$ is associated with a condition $\delta_a$, such that 
if $\delta_a=1$ then the condition on $x_a$ is satisfied if $x_a=0$, i.e. $x_a \oplus 1 = \neg x_a$, and if 
$\delta_a=0$ then the condition on $x_a$ is satisfied if $x_a=1$. The $GT^4$ gate has 4 outputs:
$y_{0}$, $y_{1}$, $y_{2}$ and ${f_{out}}$. The operation
of the $GT^{4}$ gate is defined as follows,

\begin{equation}
\begin{array}{l}
 y_a  = x_a ,\mathrm{ for}\,a=\{0,1,2\}, \\
 f_{out}  = f_{in}  \oplus \left( (x_{0} \oplus \delta_0)  \wedge  (x_{1} \oplus \delta_1) \wedge  (x_{2} \oplus \delta_{2}) \right), \\
 \end{array}
\end{equation}
\noindent
\end{definition}
\noindent
where $\oplus$ is the XOR logic operation, i.e. the target qubit $f_{in}$ will be flipped if and only if each control
qubits $x_a$ satisfies its associated condition $\delta_a$. For example, $f_{out}= \neg f_{in}$ for the gate 
$GT^{4}(x_{0} \oplus 1, x_{1}  ,x_{2} \oplus 1 ;f_{in})$ if and only if $x_0=0$, $x_1=1$ and $x_2=0$.

A $GT^4$ gate with its target qubit, $f_{in}$, initialized to state  
$\left| {1 } \right\rangle$ can be used to encode a clause $c  = \left(l_0  \vee l_1  \vee l_2 \right)$ 
using the Boolean algebraic identity,
\begin{equation}
 c=\left( {l_0  \vee l_1  \vee l_2 } \right) = \left( {\left( {l_0  \oplus 1} \right) \wedge \left( {l_1  \oplus 1} \right) \wedge \left( {l_2  \oplus 1} \right)} \right) \oplus 1, 
\label{biden}
 \end{equation}
\noindent
so that $f_{out} = c$, where $l_{a}= \mathop {\mathop x\limits^ \bullet  }$, 
and $\mathop {\mathop x\limits^ \bullet  }$ can be replaced by $x$ or ${\neg x}$  such that 
${\neg x}= x \oplus 1$ is the negation of $x$. That is,
\begin{equation}
(x_0,x_1,x_2;c)= GT^4(l_0 \oplus 1 , l_1 \oplus 1, l_1 \oplus 1 ; 1).
\end{equation}

For example, consider the following E3-CNF Boolean formula with $n=4$ and $m=3$,

\begin{equation}
f(x_0,x_1,x_2,x_3)=c_0 \wedge c_1 \wedge c_2,
\label{formula1}
\end{equation}
\noindent
with
\begin{equation}
\label{formula2}
 c_0=\left( {x_0  \vee \neg x_1  \vee \neg x_2 } \right),\\ 
 c_1=\left( {\neg x_0  \vee x_1  \vee \neg x_3 } \right), \\ 
 c_2=\left( {x_0  \vee x_2  \vee \neg x_3 } \right). \\ 
\end{equation}

Apply the Boolean algebraic identity shown in equation (\ref{biden}) on each clause, then,

\begin{equation}
\begin{array}{l}
 c_0=\left( {x_0  \vee \neg x_1  \vee \neg x_2 } \right) = \neg \left( {\neg x_0  \wedge x_1  \wedge x_2 } \right) = \left( {\neg x_0  \wedge x_1  \wedge x_2 } \right) \oplus 1, \\ 
 c_1=\left( {\neg x_0  \vee x_1  \vee \neg x_3 } \right) = \neg \left( {x_0  \wedge \neg x_1  \wedge x_3 } \right) = \left( {x_0  \wedge \neg x_1  \wedge x_3 } \right) \oplus 1, \\ 
 c_2=\left( {x_0  \vee x_2  \vee \neg x_3 } \right) = \neg \left( {\neg x_0  \wedge \neg x_2  \wedge x_3 } \right) = \left( {\neg x_0  \wedge \neg x_2  \wedge x_3 } \right) \oplus 1, \\ 
 \end{array}
\end{equation}

then each clause can be encoded using a $GT^4$ gate as follows,

\begin{equation}
\begin{array}{l}
 (x_0,x_1,x_2;c_0) \equiv  GT_0^4 \left( {x_0 \oplus 1  , x_1  , x_2 ,1 } \right),\\ 
 (x_0,x_1,x_3;c_1) \equiv  GT_1^4\left( {x_0  , x_1 \oplus 1 , x_3,1 } \right),\\ 
 (x_0,x_2,x_3;c_2) \equiv  GT_2^4\left( { x_0 \oplus 1 , x_2 \oplus 1  , x_3,1 } \right).\\ 
 \end{array}
\end{equation}

To construct a quantum circuit for this E3-CNF formula, prepare a quantum register with 4 qubits to be loaded with the values of 
$x_0$, $x_1$, $x_2$ and $x_3$, and add 3 extra qubits initialized with 
the quantum state $\left| {1 } \right\rangle$ so that $GT^4_0$ uses the first extra qubit as the target qubit,
$GT^4_1$ uses the second extra qubit as the target qubit, and so on, as shown in figure \ref{GTcircuit}. Let 
$U$ be a quantum circuit on 7 qubits defined as $U=GT_0^4 GT_1^4 GT_2^4$, then,

\begin{equation}
(x_0,x_1,x_2,x_3;c_0,c_1,c_2)=U(x_0,x_1,x_2,x_3;1,1,1).
\end{equation}

\begin{figure*}[t]
\begin{center}
\includegraphics[natwidth=245bp,natheight=175bp, width=125bp]{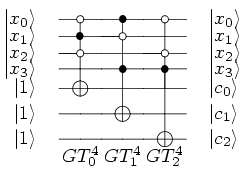}

\end{center}
\caption{A quantum circuit using $GT^4$ gates for the E3-CNF formula shown in Eqn. \ref{formula1} and \ref{formula2}, where 
$ \bullet$ on a control qubit means that the associated condition is 1 while  $\circ$ on a control qubit means that the associated condition is 0.}

\label{GTcircuit}
\end{figure*}

\section{The Algorithm}

Given an E3-CNF formula $f$ with $n$ inputs and $m$ clauses. The proposed algorithm is divided into three stages, 
the first stage prepares a superposition of all possible variable assignments for the $n$ variables. 
The second stage evaluates the $m$ clauses for every variable assignment and stores the truth values of the clauses
in truth vectors entangled with the corresponding variable assignments in the superposition.
The third stage amplifies the truth vector of clauses with maximum number of satisfied clauses using 
a partial negation and iterative measurement technique. The proposed algorithm uses $(n+m+1)$ qubits during the three stages. 
Each of the first $n$ qubits is initialized to state $\left| 0 \right\rangle$, each of the $m$ qubits 
is initialized to state $\left| 1 \right\rangle$, and an extra auxiliary qubit, denoted $\left| ax \right\rangle$ is initialized to 
state $\left| 0 \right\rangle$.
The qubit $\left| ax \right\rangle$ is an auxiliary qubit that will be used during the amplitude amplification technique. The 
amplitude of the state of $\left| ax \right\rangle$ entangled with every truth vector of clauses will act as an aggregator for the 
amount of partial negation to be applied on that state based on the 1-density of the entangled  truth vector of 
clauses. The probability of finding $\left| ax \right\rangle = \left| 1 \right\rangle$ when measured will depend of the accumulation of all partial 
negations applied on the states of that qubit.

The system is initially as follows,

\begin{equation}
\left| {\psi _0 } \right\rangle  = \left| 0 \right\rangle ^n  \otimes \left| 1 \right\rangle ^m  \otimes \left| 0 \right\rangle.   
\end{equation}

\begin{center}
\begin{figure*}[t]
\begin{center}

\setlength{\unitlength}{3947sp}%
\begingroup\makeatletter\ifx\SetFigFont\undefined
\def\x#1#2#3#4#5#6#7\relax{\def\x{#1#2#3#4#5#6}}%
\expandafter\x\fmtname xxxxxx\relax \def\y{splain}%
\ifx\x\y   
\gdef\SetFigFont#1#2#3{%
  \ifnum #1<17\tiny\else \ifnum #1<20\small\else
  \ifnum #1<24\normalsize\else \ifnum #1<29\large\else
  \ifnum #1<34\Large\else \ifnum #1<41\LARGE\else
     \huge\fi\fi\fi\fi\fi\fi
  \csname #3\endcsname}%
\else
\gdef\SetFigFont#1#2#3{\begingroup
  \count@#1\relax \ifnum 25<\count@\count@25\fi
  \def\x{\endgroup\@setsize\SetFigFont{#2pt}}%
  \expandafter\x
    \csname \romannumeral\the\count@ pt\expandafter\endcsname
    \csname @\romannumeral\the\count@ pt\endcsname
  \csname #3\endcsname}%
\fi
\fi\endgroup
\begin{picture}(2915,2199)(583,-1812)
\thinlines
\put(2987,-1287){\oval(236, 72)[tr]}
\put(2987,-1287){\oval(236, 72)[tl]}
\put(2791,-1170){\line( 0,-1){235}}
\put(2791,-1405){\line( 1, 0){376}}
\put(3167,-1405){\line( 0, 1){235}}
\put(3167,-1170){\line(-1, 0){376}}
\put(2876,-1345){\vector( 2, 1){282}}
\put(2988,-964){\oval(236, 72)[tr]}
\put(2988,-964){\oval(236, 72)[tl]}
\put(2792,-847){\line( 0,-1){235}}
\put(2792,-1082){\line( 1, 0){376}}
\put(3168,-1082){\line( 0, 1){235}}
\put(3168,-847){\line(-1, 0){376}}
\put(2877,-1022){\vector( 2, 1){282}}
\put(2990,-695){\oval(236, 72)[tr]}
\put(2990,-695){\oval(236, 72)[tl]}
\put(2794,-578){\line( 0,-1){235}}
\put(2794,-813){\line( 1, 0){376}}
\put(3170,-813){\line( 0, 1){235}}
\put(3170,-578){\line(-1, 0){376}}
\put(2879,-753){\vector( 2, 1){282}}
\put(2982,233){\oval(236, 72)[tr]}
\put(2982,233){\oval(236, 72)[tl]}
\put(2786,350){\line( 0,-1){235}}
\put(2786,115){\line( 1, 0){376}}
\put(3162,115){\line( 0, 1){235}}
\put(3162,350){\line(-1, 0){376}}
\put(2871,175){\vector( 2, 1){282}}
\put(2979,-363){\oval(236, 72)[tr]}
\put(2979,-363){\oval(236, 72)[tl]}
\put(2783,-246){\line( 0,-1){235}}
\put(2783,-481){\line( 1, 0){376}}
\put(3159,-481){\line( 0, 1){235}}
\put(3159,-246){\line(-1, 0){376}}
\put(2868,-421){\vector( 2, 1){282}}
\put(2980,-27){\oval(236, 72)[tr]}
\put(2980,-27){\oval(236, 72)[tl]}
\put(2784, 90){\line( 0,-1){235}}
\put(2784,-145){\line( 1, 0){376}}
\put(3160,-145){\line( 0, 1){235}}
\put(3160, 90){\line(-1, 0){376}}
\put(2869,-85){\vector( 2, 1){282}}
\put(1692,-687){\oval(84,92)}
\put(1693,-937){\oval(84,92)}
\put(1694,-1276){\oval(84,92)}
\put(1692,-1015){\line( 0, 1){564}}
\put(1692,-1320){\line( 0, 1){141}}
\put(2006,-1607){\framebox(642,1012){}}
\put(2004,-690){\line(-1, 0){1034}}
\put(970,-690){\line( 0, 1){  0}}
\put(2004,-940){\line(-1, 0){1034}}
\put(970,-940){\line( 0, 1){  0}}
\put(1999,-1283){\line(-1, 0){1034}}
\put(965,-1283){\line( 0, 1){  0}}
\put(970,-1525){\line( 1, 0){1034}}
\put(949,-355){\line( 1, 0){234}}
\put(1180,-113){\framebox(176,215){}}
\put(1180,135){\framebox(176,215){}}
\put(1181,-475){\framebox(176,215){}}
\put(1359,234){\line( 1, 0){141}}
\put(1365, -2){\line( 1, 0){139}}
\put(1359,-355){\line( 1, 0){141}}
\put(945,237){\line( 1, 0){229}}
\put(1501,-445){\framebox(383,811){}}
\put(945, -2){\line( 1, 0){229}}
\put(2782,235){\line(-1, 0){893}}
\put(1889,235){\line( 0, 1){  0}}
\put(1889,235){\line( 1, 0){893}}
\put(2782,  1){\line(-1, 0){893}}
\put(1889,  1){\line( 0, 1){  0}}
\put(1889,  1){\line( 1, 0){893}}
\put(2778,-355){\line(-1, 0){889}}
\put(1889,-355){\line( 0, 1){  0}}
\put(1889,-355){\line( 1, 0){889}}
\put(2654,-687){\line( 1, 0){141}}
\put(2651,-938){\line( 1, 0){141}}
\put(2653,-1286){\line( 1, 0){141}}
\put(2654,-1525){\line( 1, 0){517}}

\put(691,217){$\left| {0 } \right\rangle$}
\put(691,-46){$\left| {0 } \right\rangle$}
\put(691,-396){$\left| {0 } \right\rangle$}
\put(691,-752){$\left| {1 } \right\rangle$}
\put(691,-1003){$\left| {1 } \right\rangle$}
\put(691,-1318){$\left| {1 } \right\rangle$}
\put(691,-1601){$\left| {0 } \right\rangle$}

\put(3233,217){$\left| {x_0 } \right\rangle$}
\put(3233,-46){$\left| {x_1 } \right\rangle$}
\put(3233,-396){$\left| {x_{n-1} } \right\rangle$}
\put(3233,-752){$\left| {c_0 } \right\rangle$}
\put(3233,-1003){$\left| {c_1 } \right\rangle$}
\put(3233,-1318){$\left| {c_{m-1} } \right\rangle$}
\put(3233,-1601){$\left| {ax } \right\rangle$}

\put(691,-1178){$\vdots$}
\put(956,-1178){$\vdots$}
\put(1670,-1178){$\vdots$}
\put(3233,-1178){$\vdots$}

\put(691,-243){$\vdots$}
\put(956,-243){$\vdots$}
\put(3233,-243){$\vdots$}

\put(2115,-1802){$O(m^2)$}
\put(2175,-1195){$M_x$}

\put(1190,-396){$H$}
\put(1190,-46){$H$}
\put(1190,217){$H$}

\put(1601,-163){$C_f$}
\end{picture}%

\end{center}
\caption{A quantum circuit for the proposed algorithm.}
\label{alg}
\end{figure*}
\end{center}

\begin{itemize}
\item[1-]{Variable Assignments Preparation.} To prepare a superposition of all variable assignments of $n$ qubits, 
apply $H^{\otimes n}\otimes I^{\otimes m+1}$ on the $n+m+1$ qubits

\begin{equation}
\begin{array}{l}
 \left| {\psi _1 } \right\rangle  = \left( {H^{ \otimes n}  \otimes I^{ \otimes m + 1} } \right)\left| {\psi _0 } \right\rangle  \\ 
 \,\,\,\,\,\,\,\,\,\,\,\,  = \frac{1}{{\sqrt N }}\sum\limits_{k = 0}^{N - 1} {\left| {A_k } \right\rangle }  \otimes \left| 1 \right\rangle ^m  \otimes \left| 0 \right\rangle, \\ 
 \end{array}
\end{equation}

\noindent
where $H$ is the Hadamard gate, $I$ is the identity matrix of size $2 \times 2$, $N=2^n$, and 
$A_k  = \left( {x_0^k ,x_1^k , \ldots ,x_{n - 1}^k } \right) \in \left\{ {0,1} \right\}^n$ is the bit-wise 
representation of an integer $k$, for $0\le k \le N-1$, that represents a variable 
assignment out of the $N$ possible variable assignments. 

 \item[2-]{Preparation of the Truth Vectors of Clauses}. For every E3-CNF clause $c_j=(l_0 \vee l_1 \vee l_2)$, 
 apply a $GT^4$ gate taking qubit $j$ in the $m$ qubits register as the target qubit as shown in Section \ref{cfsec}. 
 The collection of all $GT^4$ gates applied to evaluate the $m$ clauses is denoted $C_f$ in figure \ref{alg}, 
then the system is transformed to,

\begin{equation}
\begin{array}{l}
 \left| {\psi _2 } \right\rangle  = \left( {C_f  \otimes I} \right)\left| {\psi _1 } \right\rangle  \\ 
 \,\,\,\,\,\,\,\,\,\,\,\,= \frac{1}{{\sqrt N }}\sum\limits_{k = 0}^{N - 1} {\left( {\left| {A_k } \right\rangle  \otimes \left| {C_k } \right\rangle } \right)}  \otimes \left| 0 \right\rangle,  \\ 
 \end{array}
\end{equation}
\noindent
where $C_k  = \left( {c_0^k ,c_1^k , \ldots ,c_{m - 1}^k } \right) \in \left\{ {0,1} \right\}^m$ is the truth vector 
for the $m$ clauses associated with variable assignment $A_k$.

\begin{center}
\begin{figure*}[t]
\begin{center}

\setlength{\unitlength}{3947sp}%
\begingroup\makeatletter\ifx\SetFigFont\undefined
\def\x#1#2#3#4#5#6#7\relax{\def\x{#1#2#3#4#5#6}}%
\expandafter\x\fmtname xxxxxx\relax \def\y{splain}%
\ifx\x\y   
\gdef\SetFigFont#1#2#3{%
  \ifnum #1<17\tiny\else \ifnum #1<20\small\else
  \ifnum #1<24\normalsize\else \ifnum #1<29\large\else
  \ifnum #1<34\Large\else \ifnum #1<41\LARGE\else
     \huge\fi\fi\fi\fi\fi\fi
  \csname #3\endcsname}%
\else
\gdef\SetFigFont#1#2#3{\begingroup
  \count@#1\relax \ifnum 25<\count@\count@25\fi
  \def\x{\endgroup\@setsize\SetFigFont{#2pt}}%
  \expandafter\x
    \csname \romannumeral\the\count@ pt\expandafter\endcsname
    \csname @\romannumeral\the\count@ pt\endcsname
  \csname #3\endcsname}%
\fi
\fi\endgroup
\begin{picture}(2771,2237)(585,-1718)
\thinlines
\put(3002,-1494){\oval(84,92)}
\put(3001,-1539){\line( 0, 1){ 94}}
\put(2678,-1514){\oval(236, 72)[tr]}
\put(2678,-1514){\oval(236, 72)[tl]}
\put(1646,-939){\circle*{90}}
\put(1278,-715){\circle*{90}}
\put(2206,-1270){\circle*{90}}

\put(1186,-1610){\framebox(176,215){}}
\put(1554,-1610){\framebox(176,215){}}
\put(1276,-686){\line( 0,-1){705}}
\put(943,-1505){\line( 1, 0){235}}
\put(1364,-1501){\line( 1, 0){188}}
\put(2112,-1614){\framebox(176,215){}}
\put(2288,-1501){\line( 1, 0){188}}
\put(2016,-935){\line( 1, 0){1328}}
\put(2016,-1277){\line( 1, 0){1328}}
\put(1824,-1273){\line(-1, 0){893}}
\put(1824,-935){\line(-1, 0){893}}
\put(1828,-716){\line(-1, 0){893}}
\put(2015,-1501){\line( 1, 0){ 94}}
\put(1737,-1499){\line( 1, 0){ 94}}
\put(2202,-1255){\line( 0,-1){141}}
\put(1644,-927){\line( 0,-1){470}}
\put(1098,-1706){\framebox(2118,1155){}}
\put(2482,-1397){\line( 0,-1){235}}
\put(2482,-1632){\line( 1, 0){376}}
\put(2858,-1632){\line( 0, 1){235}}
\put(2858,-1397){\line(-1, 0){376}}
\put(2567,-1572){\vector( 2, 1){282}}
\put(2866,-1496){\line( 1, 0){470}}
\put(2012,-716){\line( 1, 0){1316}}

\put(480,-1558){$\left| 0 \right\rangle$}
\put(480,-714){$\left| c_0 \right\rangle$}
\put(480,-989){$\left| c_1 \right\rangle$}
\put(480,-1319){$\left| c_{m-1} \right\rangle$}

\put(3385,-714){$\left| c_0(A_{max}) \right\rangle$}
\put(3385,-989){$\left| c_1(A_{max}) \right\rangle$}
\put(3385,-1319){$\left| c_{m-1}(A_{max}) \right\rangle$}
\put(3385,-1558){$\left| ax \right\rangle$}

\put(956,-1180){$\vdots$}

\put(1215,-1584){$V$}
\put(1578,-1579){$V$}
\put(2139,-1593){$V$}

\put(1845,-1319){$\ldots$}
\put(1845,-989){$\ldots$}
\put(1845,-714){$\ldots$}
\put(1845,-1558){$\ldots$}

\put(1750,-475){$M_x$}
\put(1750,-1880){$O(m^2)$}
\end{picture}%

\end{center}
\caption{Quantum circuits for the $M_x$ operator followed by a partial measurement then a negation 
to reset the auxiliary qubit $\left| {ax} \right\rangle$. }
\label{mmfig}
\end{figure*}
\end{center}

\item[3-]{Maximization of the Number of Satisfied Clauses}. 
The aim of this stage is to find the state $\left| C_k \right\rangle$ that contains the maximum number of 
$\left| 1 \right\rangle$s. Such a state will be denoted $\left| C_{max} \right\rangle$. A modified version 
of the amplitude amplification algorithm shown in \cite{MaxMinBis} 
will be used for this purpose. Every $\left| C_k \right\rangle$ is entangled with 
the corresponding variable assignment $\left| A_k \right\rangle$. Since the variable assignment 
$\left| A_k \right\rangle$ will not be involved directly in this stage and the corresponding 
$\left| C_k \right\rangle$ will not be modified by any operation, then for simplicity the system can be re-written as,

\begin{equation}
 \left| {\psi _3 } \right\rangle  = \frac{1}{{\sqrt N }}\sum\limits_{k = 0}^{N - 1} {\left| {C_k } \right\rangle }  \otimes \left| 0 \right\rangle.
\end{equation}
  
Let $d_k=\left\langle {C_k } \right\rangle$ be the 1-density of the state  $\left| {C_k } \right\rangle$. 
A solution to the MAX-E3-SAT problem is to find the state 
$\left| {C_{max}} \right\rangle$ with $d_{max}=max\{d_k,\,0\le k \le N-1\}$.

The aim is to find $\left| {C_{max} } \right\rangle$ when $\left| {\psi _3 } \right\rangle$ is measured. 
To find $\left| {C_{max} } \right\rangle$, the algorithm applies partial negation on 
the state of $\left| {ax} \right\rangle$ entangled with $\left| {C_k } \right\rangle$ based on 
the 1-density of $\left| {C_k } \right\rangle$, i.e. more 1's in $\left| {C_k } \right\rangle$ 
gives more negation to the state of $\left| {ax} \right\rangle$ entangled with $\left| {C_k } \right\rangle$. 
If the number of 1's in $\left| {C_k } \right\rangle$ is $m$, then 
the entangled state of $\left| {ax} \right\rangle$ will be fully negated.

Let $X$ be the Pauli-X gate which is the quantum equivalent to 
the NOT gate. It can be seen as a rotation of the Bloch Sphere around the X-axis by $\pi$ radians as follows,

\begin{equation}
X  = \left[ {\begin{array}{*{20}c}
   0 & 1  \\
   1 & 0  \\
\end{array}} \right].
\end{equation}

The $m^{th}$ partial negation operator $V$ is the $m^{th}$ root of the $X$ gate and 
can be calculated using diagonalization as follows, 

\begin{equation}
 V=\sqrt[m]{X} = \frac{1}{2}\left[ {\begin{array}{*{20}c}
   {1 + t} & {1 - t}  \\
   {1 - t} & {1 + t}  \\
\end{array}} \right],
\end{equation}

\noindent
where $t={\sqrt[m]{{ - 1}}}$, and applying $V$ for $d$ times on a qubit is equivalent to the operator,
 
\begin{equation}
 V^d  = \frac{1}{2}\left[ {\begin{array}{*{20}c}
   {1 + t^d } & {1 - t^d }  \\
   {1 - t^d } & {1 + t^d }  \\
\end{array}} \right],
\end{equation}

\noindent
such that if $d=m$, then $V^m=X$. To amplify the amplitude of the state $\left| {C_{max} } \right\rangle$, 
apply the operator $M_x$ 
on $\left| {\psi _3 } \right\rangle$ as will be shown later, 
where $M_x$ is an operator on $m+1$ qubits register that applies $V$ conditionally 
for $m$ times on $\left|ax \right\rangle$ 
based on 1-density of $\left| {c_0 c_1  \ldots c_{m-1} } \right\rangle$ 
as follows (as shown in figure \ref{mmfig}),

\begin{equation}
M_x = Cont\_V(c_0 ;ax )Cont\_V(c_1 ;ax ) \ldots Cont\_V(c_{m - 1} ;ax),
\end{equation}
\noindent
where the $Cont\_V(c_j ;ax )$ gate is a 2-qubit controlled gate with control qubit ${\left| c_j \right\rangle }$ and 
target qubit ${\left| ax \right\rangle }$. The $Cont\_V(c_j ;ax )$ gate applies $V$ conditionally on 
${\left| ax \right\rangle }$ if ${\left| c_j \right\rangle }= {\left| 1 \right\rangle }$, 
so, if $d$ is the 1-density of $\left| {c_0 c_1  \ldots c_{m-1} } \right\rangle$ then,

\begin{equation}
M_x\left( {\left| {c_0 c_1 ...c_{m - 1} } \right\rangle  \otimes \left| 0 \right\rangle } \right) = \left| {c_0 c_1 ...c_{m - 1} } \right\rangle  \otimes \left( {\frac{{1 + t^{d} }}{2}\left| 0 \right\rangle  + \frac{{1 - t^{d} }}{2}\left| 1 \right\rangle } \right),
\end{equation}
\noindent
and the probabilities of finding the auxiliary qubit $\left|ax \right\rangle$ in state 
${\left| 0 \right\rangle }$ or ${\left| 1 \right\rangle }$ when measured is respectively as follows,

\begin{equation}
\begin{array}{l}
 Pr{(\left| ax \right\rangle = \left| 0 \right\rangle)}  = \left| {\frac{{1 + t^d }}{2}} \right|^2  = \cos ^2 \left( {\frac{{d\pi }}{{2m}}} \right), \\ 
 Pr{(\left| ax \right\rangle = \left| 1 \right\rangle)}  = \left| {\frac{{1 - t^d }}{2}} \right|^2  = \sin ^2 \left( {\frac{{d\pi }}{{2m}}} \right). \\ 
 \end{array}
\end{equation}


To find the state ${\left| C_{max} \right\rangle }$ in $\left| {\psi _3 } \right\rangle$, 
the proposed algorithm is as shown in Algorithm 1  and as shown in figure \ref{mmfig}. 
For simplicity and without loss of generality, assume that a single $\left| C_{max} \right\rangle$ 
exists in $\left| \psi_3 \right\rangle$, although different variable assignments might be associated 
with truth vectors with maximum 1-density with different order of 1's, but such information is not known in advance.


\begin{algorithm}
\label{alg1}
\caption{Amplify $\left| C_{max} \right\rangle$ in  $\left| {\psi _3 } \right\rangle$ }
\begin{algorithmic}[1] 
\STATE Let $\left| {\psi _r } \right\rangle = \left| {\psi _3 } \right\rangle$ 
\FOR {$counter = 1 \to r$}
	\STATE Apply the operator $M_x$ on $\left| {\psi _r } \right\rangle$.
	\STATE Measure $\left| ax \right\rangle$
	\IF {$\left|ax \right\rangle=\left|1 \right\rangle$}
		\STATE Let $\left| {\psi _r } \right\rangle$ be the system post-measurement of $\left|ax \right\rangle$ 
		\STATE Apply $X$ gate on $\left| ax \right\rangle$ 
		\COMMENT {to reset $\left| ax \right\rangle$ to $\left| 0 \right\rangle$ for the next iteration} 
	\ELSE
		\STATE  Let $\left| {\psi _r } \right\rangle = \left| {\psi _3 } \right\rangle$  and restart the for-loop
\ENDIF
\ENDFOR
\STATE Measure the first $m$ qubits in $\left| {\psi _r } \right\rangle$ to read $\left| C_{max} \right\rangle$.

\IF {$\left| C_{max} \right\rangle=\left| {1} \right\rangle ^{\otimes m}$}
	\STATE The E3-CNF formula is satisfiable
\ELSE
	\STATE The E3-CNF formula is not satisfiable where number of $\left| {1} \right\rangle$'s in $\left| C_{max} \right\rangle$ represents the maximum number of satisfied clauses in order
\ENDIF
\STATE Measure the first $n$ qubits in $\left| {\psi _2 } \right\rangle$ to read the corresponding variable assignment $\left| A_{max} \right\rangle$

\end{algorithmic}
\end{algorithm}

We require that Algorithm 1 finds $\left|ax \right\rangle=\left|1 \right\rangle$ for $r$ times 
in a row. The probability of finding $\left|ax \right\rangle=\left|1 \right\rangle$ after Line:4 in 
the $1^{st}$ iteration of the for-loop is given by,

\begin{equation}
Pr^{(1)}{(\left| ax \right\rangle = \left| 1 \right\rangle)}  = \frac{1}{N} \sum\limits_{k = 0}^{N - 1} {\sin ^2 \left( {\frac{{d_k \pi }}{{2m}}} \right)}.
\label{probax2}
\end{equation} 

The probability of finding $\left|\psi_r \right\rangle=\left|C_{max} \right\rangle$ after Line:4 in 
the $1^{st}$ iteration, i.e. $r=1$ is given by,

\begin{equation}
Pr^{(1)}{(\left|\psi_r \right\rangle=\left|C_{max} \right\rangle)}  = \frac{1}{N} {\sin ^2 \left( {\frac{{d_{max} \pi }}{{2m}}} \right)} .
\end{equation}

The probability of finding $\left|ax \right\rangle=\left|1 \right\rangle$ after Line:4 in 
the $r^{th}$ iteration, is given by,

\begin{equation}
Pr^{(r)}{(\left| ax \right\rangle = \left| 1 \right\rangle)}  = \frac{{\sum\limits_{k = 0}^{N - 1} {\sin ^{2r} \left( {\frac{{d_k \pi }}{{2m}}} \right)} }}{{\sum\limits_{k = 0}^{N - 1} {\sin ^{2(r - 1)} \left( {\frac{{d_k \pi }}{{2m}}} \right)} }}.
\end{equation}

\begin{center}
\begin{figure*}[t]
\begin{center}
   \vspace{20pt}%
   \includegraphics[width=250pt]{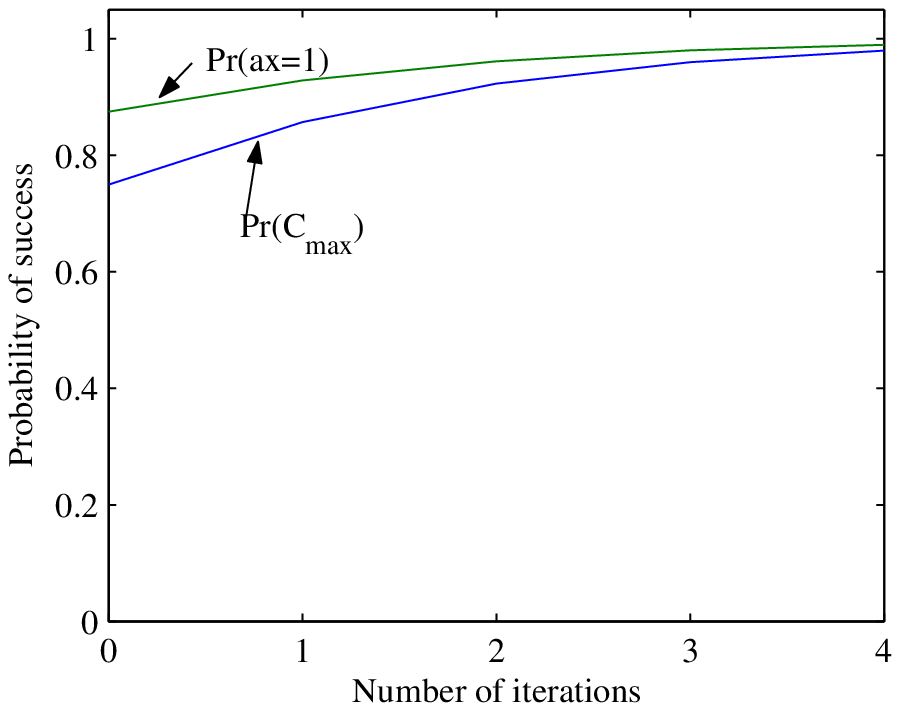}
   \caption{The probability of success for an E3-CNF formula: $\left( {x_0  \vee x_1  \vee x_2 } \right) \wedge \left( {\neg x_0  \vee \neg x_1  \vee \neg x_2 } \right)$  
   with $n=3$ and $m=2$ with $d_{max}=2$, i.e. the maximum number of satisfied clauses is 2, 
   where $Pr^{(1)}{(\left|\psi_{r}\right\rangle  = \left|C_{max}\right\rangle)}=0.75$, $Pr^{(1)}{(\left|ax\right\rangle  = \left|1\right\rangle)}=0.875$, $Pr^{(r)}{(\left|\psi_{r}\right\rangle  = \left| C_{max}\right\rangle)}=0.98$, 
and $Pr^{(r)}{(\left|ax\right\rangle  = \left|1\right\rangle)}=0.99$.}
   \label{fig21}
\end{center}
\end{figure*}
\end{center}

The probability of finding $\left|\psi_r \right\rangle=\left|C_{max} \right\rangle$ after Line:4 in the $r^{th}$ iteration, i.e. $r>1$ is given by,

\begin{equation}
Pr^{(r)}{(\left|\psi_r \right\rangle=\left|C_{max} \right\rangle)}  = \frac{{{\sin ^{2r} \left( {\frac{{d_{max} \pi }}{{2m}}} \right)} }}{{\sum\limits_{k = 0}^{N - 1} {\sin ^{2(r - 1)} \left( {\frac{{d_k \pi }}{{2m}}} \right)} }}.
\end{equation} 

To get the highest probability of success for $Pr{(\left|\psi_r \right\rangle=\left|C_{max} \right\rangle)}$, the for-loop should be repeated until 
$\left| Pr^{(r)}{(\left| ax \right\rangle = \left| 1 \right\rangle)}  - Pr^{(r)}{(\left|\psi_r \right\rangle=\left|C_{max} \right\rangle)} \right| \le \epsilon$ for small $\epsilon \ge 0$ 
as shown in figure \ref{fig21}. 
This happens when,
 
\begin{equation}
\sum\limits_{k = 0,k\ne max}^{N - 1} {\sin ^{2r} \left( {{\textstyle{{d_k \pi } \over {2m}}}} \right)}  \le \epsilon, 
\end{equation}
\noindent
and since the Sine function is a decreasing function then for sufficient large $r$,

\begin{equation}
\sum\limits_{k = 0,k\ne max}^{N - 1} {\sin ^{2r} \left( {\frac{{d_k \pi }}{{2m}}} \right)}  \approx \sin ^{2r} \left( {\frac{{d_{nm} \pi }}{{2m}}} \right),
\end{equation} 
\noindent
where $d_{nm}$ is the next maximum 1-density less than $d_{max}$. The values of $d_{max}$ and $d_{nm}$ are unknown in advance, 
so let $d_{max}=m$ be the number of satisfied clauses, 
then in the worst case when $d_{max}=m$, $d_{nm}=m-1$ and $m={\textstyle{4 \over 3}}n(n - 1)(n - 2)$, 
the required number of iterations $r$ 
for $\epsilon  = 10^{ - \lambda }$ and $\lambda>0$ can be calculated using the formula,

\begin{equation}
0 < \sin ^{2r} \left( {\frac{{(m-1) \pi }}{{2m}}} \right) \le \epsilon,
\end{equation} 
\noindent
then,

\begin{equation}
\begin{array}{l}
 r \ge \frac{{\log \left( \epsilon  \right)}}{{2\log \left( {\sin \left( {\frac{{\left( {m - 1} \right)\pi }}{{2m}}} \right)} \right)}} \\ 
 \,\,\,\, = \frac{{\log \left( {10^{ - \lambda } } \right)}}{{2\log \left( {\cos \left( {\frac{\pi }{{2m}}} \right)} \right)}} \\ 
 \,\,\,\, \ge \lambda \left( {\frac{{2m}}{\pi }} \right)^2 =O\left( {m^2 } \right), \\
 \end{array}
\end{equation} 
\noindent
where $0 < m \le {\textstyle{4 \over 3}}n(n - 1)(n - 2)$. 
When $m={\textstyle{4 \over 3}}n(n - 1)(n - 2)$, then the upper bound 
for the required number of iterations $r$ is $O\left( {n^6 } \right)$. Assuming that 
a single $\left|C_{max} \right\rangle$ exists in the superposition will increase the required number of iterations, 
so it is important to notice here that the probability of success will not be over-cooked by increasing the 
required number of iteration $r$ similar to the common amplitude 
amplification techniques.

\end{itemize}

\subsection{Tuning the Probability of Success}

During the above analysis, two problems might arise during the implementation of the proposed algorithm.
The first one is to finding $\left|ax \right\rangle=\left|1 \right\rangle$ for $r$ times in a row 
which is a critical issue in the 
success of the proposed algorithm to terminate in polynomial time. The second problem is that the value of $d_{max}$ 
is not known in advance, where the value of $Pr^{(1)}{(\left|ax\right\rangle  = \left|1\right\rangle)}$ 
shown in equation (\ref{probax2}) plays an important role in the success of finding 
$\left|ax \right\rangle=\left|1 \right\rangle$ in the next iterations, this value depends heavily on 
the 1-density of $\left| C_{max} \right\rangle$, i.e. the ratio ${\textstyle{{d_{max} } \over m}}$.

Consider the case of a complete E3-CNF formula where the number of clauses is 
$m={\textstyle{4 \over 3}}n(n - 1)(n - 2)$ and all the $\left|C_k \right\rangle$'s are equivalent 
where anyone can be taken as $\left|C_{max} \right\rangle$. In this case, each clause $c_j$ will be satisfied by 
7 variable assignments out of 8 possible variable assignment, then $d_{max}=\frac{7}{8} m$ for any $\left|C_k \right\rangle$ 
\cite{Trevisan:1996}, so $Pr^{(1)}{(\left|ax\right\rangle  = \left|1\right\rangle)}$ is as follows, 

\begin{equation}
\begin{array}{l}
Pr^{(1)}{(\left|ax\right\rangle  = \left|1\right\rangle)}  =  {\sin ^2 \left( {\frac{{d_{max}\pi }}{{2m}}} \right)}\\
\,\,\,\,\,\,\,\,\,\,\,\,\,\,\,\,\,\,\,\,\,\,\,\,\,\,\,\,\,\,\,\,\,\,\,\,\,\,\,\,\,\,\,\,\, = {\sin ^2 \left( {\frac{{7\pi }}{{16}}} \right)} \\
\,\,\,\,\,\,\,\,\,\,\,\,\,\,\,\,\,\,\,\,\,\,\,\,\,\,\,\,\,\,\,\,\,\,\,\,\,\,\,\,\,\,\,\,\, = 0.9619.
\end{array}
\label{probax2_2}
\end{equation}

This case is a trivial case for the proposed algorithm by setting $m=d_{max}$ in $m^{th}$ root of $X$ to get a probability of 
success of certainty after a single iteration. Assuming a blind approach where $d_{max}$ is not known, 
then this case represents the worst case \cite{Zhang:2001} and iterating the proposed algorithm will not amplify 
the amplitudes after arbitrary number of iterations. 
For an arbitrary E3-CNF formula, the actual probability of success will depend of 
the 1-density of $\left|C_{max} \right\rangle$, i.e. the ratio ${\textstyle{{d_{max} } \over m}}$. 
In the following, a tuning of $Pr^{(1)}{(\left|ax\right\rangle  = \left|1\right\rangle)}$ 
will be shown so that we can find $\left|ax \right\rangle=\left|1 \right\rangle$ 
after the first iteration with an arbitrary higher probability of success close to certainty 
without a priori knowledge of $d_{max}$.

\begin{center}
\begin{figure*}[t]
\begin{center}
   \vspace{20pt}%
   \includegraphics[width=250pt]{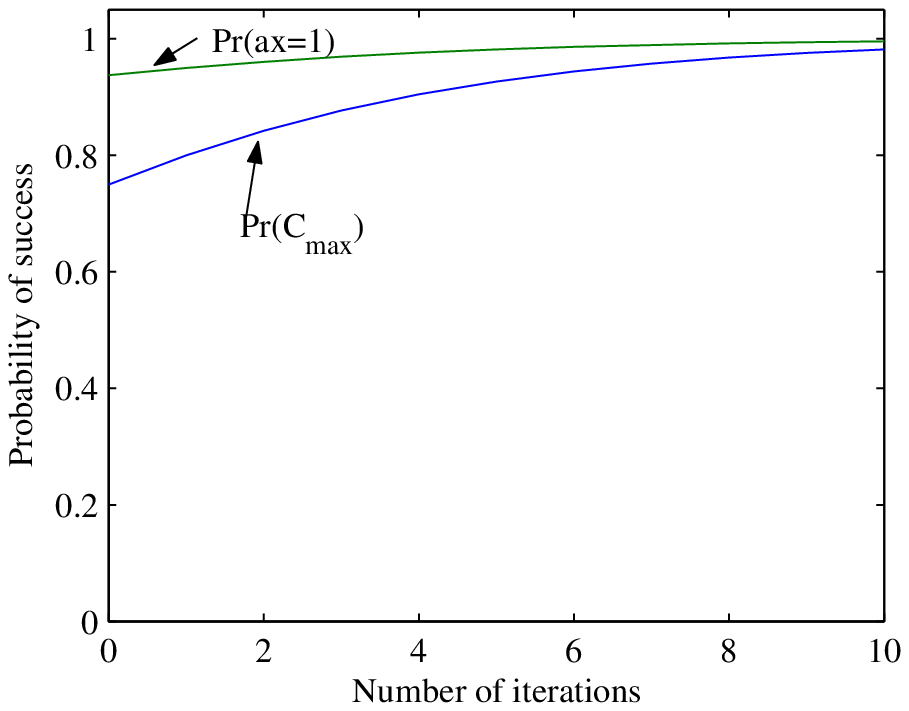}
      \caption{The probability of success for the E3-CNF formula shown in figure \ref{fig21} 
where $Pr^{(1)}{(\left|ax\right\rangle  = \left|1\right\rangle)}$ is raised from 0.875 to 0.94 by adding a single 
temporary qubit initialized to state $\left|1 \right\rangle$, i.e. $\mu_{max}=1$ 
where $Pr^{(1)}{(\left|\psi_{r}\right\rangle  = \left|C_{max}\right\rangle)}=0.75$, $Pr^{(1)}{(\left|ax\right\rangle  = \left|1\right\rangle)}=0.94$, $Pr^{(r)}{(\left|\psi_{r}\right\rangle  = \left|C_{max}\right\rangle)}=0.98$, 
and $Pr^{(r)}{(\left|ax\right\rangle  = \left|1\right\rangle)}=0.99$.}
   \label{fig22}
\end{center}
\end{figure*}
\end{center}

For an arbitrary E3-CNF formula, we could interpret the formula for $Pr(\left| ax \right\rangle = \left| 1 \right\rangle)$ in equation (\ref{probax2}) 
as the expected value of the function, 
\begin{equation}
\phi(x) = sin^2 \left( \frac {x \pi} {2}\right),
\end{equation}
\noindent
where $x$ is the proportion of clauses satisfied by a random truth assignment, that is,  $Pr(\left| ax \right\rangle = \left| 1 \right\rangle) = E[ \phi(x) ]$. 
The bounds for the probability of finding $\left|ax\right\rangle = \left|1\right\rangle$ in the first iteration is 
as shown in the following Lemma,

\begin{lemma}
The probability of finding $\left|ax\right\rangle = \left|1\right\rangle$ in the first iteration is bounded as follows,
\begin{equation}
0.691 < 1 - \frac{\pi^2}{32} \leq Pr^{(1)}(\left| ax \right\rangle = \left| 1 \right\rangle)  \leq \sin( \frac {7 \pi}{16})  < 0.981.
\end{equation}
\end{lemma}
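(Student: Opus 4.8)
The plan is to treat $Pr^{(1)}(|ax\rangle = |1\rangle)$ purely as the expectation $E[\phi(x)]$ of the function $\phi(x)=\sin^2\left(\frac{\pi x}{2}\right)$, where $x = d_k/m$ is the proportion of clauses satisfied by a uniformly random truth assignment, as already noted in the excerpt. The one external fact I would invoke is that for any E3-CNF formula this proportion has mean $E[x]=7/8$: each clause is a disjunction of three literals and is falsified by exactly one of the eight assignments to its three variables, so by linearity of expectation $E[d_k]=\frac{7}{8}m$ regardless of satisfiability. Thus both inequalities reduce to extremal estimates of $E[\phi(x)]$ subject only to $E[x]=7/8$ and $x\in[0,1]$, which is what makes the stated bounds hold uniformly over all instances.

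For the upper bound I would first dominate $\phi$ by a concave function. Since $\frac{\pi x}{2}\in\left[0,\frac{\pi}{2}\right]$ for $x\in[0,1]$, we have $\sin\left(\frac{\pi x}{2}\right)\in[0,1]$ and hence $\phi(x)=\sin^2\left(\frac{\pi x}{2}\right)\le \sin\left(\frac{\pi x}{2}\right)$. The map $x\mapsto\sin\left(\frac{\pi x}{2}\right)$ is concave on $[0,1]$ (its second derivative is $-\left(\frac{\pi}{2}\right)^2\sin\left(\frac{\pi x}{2}\right)\le 0$ there), so Jensen's inequality gives $E[\phi(x)]\le E\left[\sin\left(\frac{\pi x}{2}\right)\right]\le \sin\left(\frac{\pi}{2}E[x]\right)=\sin\left(\frac{7\pi}{16}\right)$, and a numerical check gives $\sin\left(\frac{7\pi}{16}\right)<0.981$.

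For the lower bound the direct Jensen route fails, because $\phi$ is convex on $\left[0,\frac12\right]$ and concave on $\left[\frac12,1\right]$ with an inflection at $x=\frac12$, so it is neither convex nor concave on the relevant interval. I would instead pass to the complementary variable $y=1-x$, the proportion of unsatisfied clauses, with $E[y]=\frac18$. Writing $\phi(x)=\cos^2\left(\frac{\pi y}{2}\right)=1-\sin^2\left(\frac{\pi y}{2}\right)$ and using the elementary inequalities $\sin\theta\le\theta$ for $\theta\ge 0$ and $y^2\le y$ on $[0,1]$, I get $\sin^2\left(\frac{\pi y}{2}\right)\le\left(\frac{\pi y}{2}\right)^2\le \frac{\pi^2}{4}y$. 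Taking expectations, $E\left[\sin^2\left(\frac{\pi y}{2}\right)\right]\le\frac{\pi^2}{4}E[y]=\frac{\pi^2}{32}$, and therefore $E[\phi(x)]\ge 1-\frac{\pi^2}{32}>0.691$.

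The only genuinely delicate point is this lower bound: the inflection of $\phi$ rules out a one-line Jensen argument, and the device that makes everything collapse to the single moment $E[y]=\frac18$ is the linearization $y^2\le y$ applied after the quadratic bound $\sin^2\left(\frac{\pi y}{2}\right)\le\left(\frac{\pi y}{2}\right)^2$. Everything else is routine: verifying concavity of $\sin\left(\frac{\pi x}{2}\right)$ on $[0,1]$ for the upper bound, and evaluating the two constants $1-\frac{\pi^2}{32}\approx0.6916$ and $\sin\left(\frac{7\pi}{16}\right)\approx0.9808$ to confirm the outer numerical inequalities.
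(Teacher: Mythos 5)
Your proof is correct and follows essentially the same route as the paper: the upper bound (dominating $\sin^2$ by $\sin$ and applying Jensen with $E[d_k]=\tfrac{7}{8}m$) is identical, and your lower bound uses the same two ingredients as the paper's, since $\sin^2\left(\frac{\pi y}{2}\right)\le\left(\frac{\pi y}{2}\right)^2$ is exactly the paper's Taylor bound $\sin^2\left((1-x)\pi/2\right)\ge 1-\frac{\pi^2x^2}{4}$ and your linearization $y^2\le y$ is precisely the paper's step $d_k^2\le d_k m$, merely applied before taking the expectation instead of after expanding the square. Your version is a slightly cleaner bookkeeping of the same argument, and you make explicit the fact $E[d_k]=\tfrac{7}{8}m$ that the paper uses implicitly.
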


\begin{proof}
\begin{equation}
Pr ^{(1)} (\left| ax \right\rangle = \left| 1 \right\rangle) = E\left[ {\sin ^2 \left( {\frac{{d_k \pi }}{{2m}}} \right)} \right] \le E\left[ {\sin \left( {\frac{{d_k \pi }}{{2m}}} \right)} \right].
\end{equation}

Since $\sin(x)$ is a concave function on $0 \le x \le \pi/2$, it follows from Jensen's inequality that,
\begin{equation}
  Pr^{(1)}(\left| ax \right\rangle = \left| 1 \right\rangle)  \le \sin \left(\frac{ E[d_k] \pi }{ 2 m } \right) =  \sin\left( \frac{7 \pi}{16} \right)  < 0.981.
\end{equation}

For the lower bound we use, 
\begin{equation}
    \sin^2 ( (1 - x) \pi/2 ) \geq 1 - \frac{\pi^2 x^2}{4},
\end{equation}
which follows from the Taylor series taken around $x = 0$. Then,
\begin{equation}
\begin{array}{l}
Pr^{(1)}(\left| ax \right\rangle = \left| 1 \right\rangle)  \ge \frac{1}{N} \sum_{k=0}^{N-1} \left( 1 - \frac{\pi^2 (1 - d_k/m)^2}{4} \right)\\
\,\,\,\,\,\,\,\,\,\,\,\, \,\,\,\,\,\,\,\,\,\,\,\,\,\,\,\,\,\,\,\,\,\,\,\,\,\,= 1 + \frac{3 \pi^2}{16} - \frac{\pi^2}{4N m^2} \sum_{k=0}^{N-1} d_k^2 \\
\,\,\,\,\,\,\,\,\,\,\,\, \,\,\,\,\,\,\,\,\,\,\,\,\,\,\,\,\,\,\,\,\,\,\,\,\,\,\ge 1 + \frac{3 \pi^2}{16} - \frac{\pi^2}{4N m^2} \sum_{k=0}^{N-1} d_k m\\
\,\,\,\,\,\,\,\,\,\,\,\, \,\,\,\,\,\,\,\,\,\,\,\,\,\,\,\,\,\,\,\,\,\,\,\,\,\,= 1 + \frac{3 \pi^2}{16} - \frac{\pi^2}{4N m} \sum_{k=0}^{N-1} d_k\\
\,\,\,\,\,\,\,\,\,\,\,\, \,\,\,\,\,\,\,\,\,\,\,\,\,\,\,\,\,\,\,\,\,\,\,\,\,\,= 1 + \frac{3 \pi^2}{16} - \frac{7 \pi^2}{32}\\
\,\,\,\,\,\,\,\,\,\,\,\, \,\,\,\,\,\,\,\,\,\,\,\,\,\,\,\,\,\,\,\,\,\,\,\,\,\,= 1 - \frac{\pi^2}{32}  > 0.691.\\
\end{array}
\end{equation}

\end{proof}

To overcome the problem of low probability of finding $\left|ax\right\rangle = \left|1\right\rangle$ in the first iteration, 
we can add $\mu_{max}$ temporary qubits initialized to state $\left|1 \right\rangle$ 
to the register $\left| {C_k } \right\rangle$ as follows,

\begin{equation}
\left| {c_0 c_1 ...c_{m - 1} } \right\rangle \to \left| {c_0 c_1 \ldots c_{m - 1}c_{m}c_{m+1}\ldots c_{m+\mu_{max}-1} } \right\rangle,
\end{equation}
\noindent
so that the extended number of clauses $m_{ext}$ will be $m_{ext}=m+\mu_{max}$ and $V=\sqrt[{m_{ext} }]{{X }}$ will be used 
instead of  $V=\sqrt[{m}]{{ X }}$ in the $M_x$ operator, then the density of 1's will be 
$\frac{{{\textstyle{7 \over 8}}m + \mu _{\max } }}{{m + \mu _{\max } }}$. To get a probability of success $Pr_{max}$ 
to find $\left|ax \right\rangle=\left|1 \right\rangle$ after the first iteration of the for-loop in Algorithm 1,

\begin{equation}
Pr ^{(1)} \left( {\left|ax\right\rangle  = \left|1\right\rangle} \right) = N\alpha ^2 \sin ^2 \left( {\frac{{\pi \left( {{\textstyle{7 \over 8}}m + \mu _{\max } } \right)}}{{2\left( {m + \mu _{\max } } \right)}}} \right) \ge Pr_{\max },
\end{equation}
\noindent
then the required number of temporary qubits $\mu_{max}$ is calculated as follows,

\begin{equation}
\mu _{\max }  \ge m\left( {\frac{{\omega  - \frac{7}{8}}}{{1 - \omega }}} \right),
\end{equation}
\noindent
where $\omega  = {\textstyle{2 \over \pi }}\sin ^{ - 1} \left( {\sqrt {{{{Pr_{\max }} }}} } \right)$ 
and $N\alpha ^2 =1$. For example, if $Pr_{\max }=0.99$, 
then $Pr ^{(1)} \left( \left| ax \right\rangle = \left| 1 \right\rangle \right)$ will be in the neighborhood of 99$\%$ as shown in figure \ref{fig22}. To conclude, 
the problem of low 1-density of $\left| {C_{max} } \right\rangle$ can be solved with a polynomial increase in the number of qubits to get the solution 
$\left|C_{max} \right\rangle$ in $O\left(m_{ext}^2\right)=O\left( {n^6 } \right)$ iterations with arbitrary 
high probability $Pr_{max}<1$ to terminate in poly-time, i.e. to read $\left|ax \right\rangle=\left|1 \right\rangle$ 
for $r$ times in a row.



\section{Conclusion}

Given an E3-CNF Boolean formula with $n$ inputs, the paper showed that BQP contains NP 
in a non-relativized world by proposing a BQP quantum algorithm to solve the MAX-E3-SAT problem with $m$ clauses. 
The proposed algorithm encoded every clause as a GT$^4$ gate where $O(n+m)$ qubits are used.
The algorithm is divided into three stages; the first stage prepares a superposition of all possible 
variable assignments. In the second stage, the algorithm evaluates the set of clauses for all possible variable assignments 
using a quantum circuit composed of GT$^4$ gates so that each variables assignment is entangled with a truth vector 
of clauses evaluated according to that variables assignment. In the third stage, the algorithm amplified 
the amplitudes of the truth vector of clauses that achieves the maximum satisfaction to the set of clauses 
using an amplitude amplification technique that applies an iterative partial negation 
where partial negation is applied to the state of an auxiliary qubit entangled with the truth vector of clauses based on the number 
of satisfied clauses, i.e. more satisfied clauses implies more negation to entangled state of the auxiliary qubit. 
A partial measurement on the auxiliary qubit is then used to amplify the set of clauses with more negation. 
The third stage requires $O(m^2)$ iterations and in the worst case requires 
$O(n^6)$ iterations. It was shown that the proposed algorithm achieves an arbitrary high probability of success of 
$1-\epsilon$ for small $\epsilon>0$ using a polynomial increase in the resources by adding dummy clauses with predefined 
values to give more negation to the best truth vector of clauses. 

In the same manner, the proposed algorithm can also be used decide if a given E3-CNF Boolean formula is satisfiable or not 
by checking the truth vector of clauses, if the $m$ clauses are satisfied then the E3-CNF Boolean formula is satisfiable, 
if not, then the proposed algorithm gives the maximum number of satisfied clauses with the corresponding variable assignment.

The proposed algorithm can easily be extended in a trivial way to solve/decide an E$k$-CNF Boolean formula by encoding 
any E$k$-CNF clause as a GT$^{k+1}$ gate where it can be shown that the algorithm will require $O(n^{2k})$ iterations.


\begin{thebibliography}{10}

\bibitem{Adleman:1997}
Adleman, L., DeMarrais, J. and  Huang, M.-D., 
Quantum Computability. 
SIAM Journal on Computing, 
26(5):1524–1540, 1997.


\bibitem{Bennett:1997}
Bennett, C. H., Bernstein, E., Brassard, G. and Vazirani, U.,    
Strengths and Weaknesses of Quantum Computing.
SIAM Journal on Computing, 26(5):1510-1523, 1997.

\bibitem{Bernstein:1997}
Bernstein, E. and Vazirani, U.,
Quantum Complexity Theory
SIAM Journal on Computing.
26(5): 1411-1473, 1997.

\bibitem{Buhrman:1999}
Buhrman, H. and Van Dam, W.,
Quantum Bounded Query Complexity.
In Proceedings of the 14$^{th}$ Annual IEEE Conference on Computational Complexity. 
149-156, 1999.



\bibitem{Dimitriou:2005}
Dimitriou, T.,
SAT Distributions with Planted Assignments and Phase Transitions Between Decision and Optimization Problems.
Discrete Applied Mathematics 153, pp. 58--72, 2005.

\bibitem{Fortnow:1999}
Fortnow, L. and Rogers, J., 
Complexity Limitations on Quantum Computation. 
Journal of Computer and System Sciences,
59(2):240–252, 1999.


\bibitem{hastad:2001}
H{\aa}stad, J.,
Some Optimal Inapproximability Results. 
Journal of the ACM, 48(4):798–859, 2001.

\bibitem{moore:martens}
Moore, C. and Martens, S., 
The Nature of Computation. 
Oxford University Press. 2011


\bibitem{toffoli80}
Toffoli, T.,
\newblock Reversible Computing.
\newblock In W.~de~Bakker and J.~van Leeuwen, editors, {\em Automata, Languages
  and Programming}, page 632. Springer, New York, 1980.
\newblock Technical Memo MIT/LCS/TM-151, MIT Lab for Computer Science
  (unpublished).

\bibitem{Trevisan:1996}
Trevisan, L.,
Gadgets, Approximation, and Linear Programming.
In Proceedings of the  37$^{th}$ Annual Symposium on Foundations of Computer Science,
pp. 617--626, 1996. 

\bibitem{CombOptBook}
Trevisan, L., 
Paradigms of Combinatorial Optimization: Problems and New Approaches. 
(Chapter 13: Inapproximability of Combinatorial Optimization Problems), vol. 2, p.381,    
Vangelis Th. Paschos (Editor),  Wiley-ISTE, 2010.


\bibitem{Williamson:2011}
Williamson, D. P. and Shmoys, D. B.,
The Design of Approximation Algorithms.
Cambridge University Press. 2011.


\bibitem{MaxMinBis}
Younes, A.,
A Bounded-error Quantum Polynomial Time Algorithm for Two Graph Bisection Problems.
Quantum Information Processing. doi :10.1007/s11128-015-1069-y, 2015.

  
\bibitem{Zhang:2001}
Zhang, W., 
Phase Transitions and Backbones of 3-SAT and Maximum 3-SAT.
T. Walsh (Ed.): CP 2001, LNCS 2239, pp. 153–-167, 2001.


\end{thebibliography}
\end{document}